\documentclass{article}

\usepackage{arxiv}

\usepackage[utf8]{inputenc} 
\usepackage[T1]{fontenc}    
\usepackage{booktabs}       
\usepackage{amsfonts}       
\usepackage{nicefrac}       
\usepackage{microtype}      
\usepackage{lipsum}
\usepackage{graphicx}
\usepackage{amsmath}
\usepackage{amsfonts}
\usepackage{graphicx}      
\usepackage{subcaption}
\usepackage{float}

\newtheorem{proof}{Proof}
\newtheorem{prop}{Proposition}
\graphicspath{ {./images/} }

\title{Output adaptive control for linear systems under parametric uncertainties with finite-time matching input harmonic disturbance rejection}

\author{
 Dmitrii Dobriborsci \\
  Faculty of Control Systems and Robotics\\
  ITMO University\\
  St. Petersburg, Russia \\
  \texttt{dmitrii.dobriborsci@itmo.ru} \\
   \And
 Sergey Kolyubin \\
   Faculty of Control Systems and Robotics\\
  ITMO University\\
  St. Petersburg, Russia \\
  \texttt{s.kolyubin@itmo.ru} \\
  \And
 Alexey Bobtsov \\
   Faculty of Control Systems and Robotics\\
  ITMO University\\
  St. Petersburg, Russia \\
  \texttt{bobtsov@mail.ru} \\
}

\begin{document}
\maketitle
\begin{abstract}
We consider the task of motion control for non-prehensile manipulation using parallel kinematics mechatronic setup, in particular, stabilization of a ball on a plate under unmeasured external harmonic disturbances. System parameters are assumed to be unknown, and only a ball position is measurable with a resistive touch sensor. To solve the task we propose a novel passivity-based output control algorithm that can be implemented for unstable linearized systems of an arbitrary relative degree. In contrast to previous works, we describe a new way to parametrize harmonic signal generators and an estimation algorithm with finite-time convergence. This scheme enables fast disturbance cancellation under control signal magnitude constraints. 
\end{abstract}


\section{Introduction}
The development of control algorithms for robot manipulators with a parallel kinematic scheme is an pressing challenge in the tasks of dynamic manipulation. Such systems have several advantages compared to manipulators with a serial scheme: kinematic chains are closed, which leads to robustness, as well as high accuracy of the positioning of the mechanism as a whole. Movable parallel parts reduce the load on the drive, which improves the dynamics and accuracy of the system \cite{Lynch199964}. Similar systems are used in flight simulators, in simulators for car drivers, in the production process. Parallel kinematics robots are also widely used in biomechatronics and rehabilitation of the neck, knee joints and foot joints. For example, in rehabilitation, various variations of the Gaugh-Stewart platform with 6 degrees of freedom are used. Some scientists have proposed to use simplified models with two or three degrees of freedom. The robotic setup developed during the study in the general case is a parallel manipulator robot with two degrees of freedom. In tasks of nonprehensile manipulation, it becomes possible to use such techniques as pushing, throwing, hitting, tilting. As a result, the scope of application of robotic manipulators in the industrial world expands \cite{Dobriborsci2017}. The developed robotic setup allows to conduct experiments such as positioning an object predefined coordinates, moving along a given path, identifying a dynamic model, as well as conducting experimental testing of modern control algorithms and information processing

This research is devoted to the output-feedback control of  linear parametrically uncertain plants under unmeasured matching input harmonic disturbances. The paper presents a novel switching control algorithm that combines a passification-based output controller with a finite-time harmonic disturbance parameters estimation algorithm that guarantees convergence of the disturbance parameters. Thus, the overall closed-loop system performance can be improved. 

It is assumed that the frequency of the harmonic signal is unknown. In the majority of works devoted to the synthesis of frequency identification algorithms the possibility of increasing the rate of convergence is not discussed, which can also be attributed to the open problems of control theory.

In the present work, the method of a 'consecutive compensator' \cite{6044373} is used as a basic control approach. This method had been proven efficient in a number of parametrically uncertain robotic and mechatronic systems control applications, see e.g.  \cite{Dobriborsci2018655, Dobriborsci201862, Dobriborsci2018533}. Moreover, the approach guarantees global convergence, can be applied to systems with an arbitrary relative degree, has a simple structure and is easy to configure. We implement this output-feedback controller to stabilize an unstable parametrically uncertain plant and further use its output for the input disturbance model identification.

The problem of cancelling external harmonic disturbances acting on unstable parametrically and structurally uncertain plants by identifying disturbance's internal model parameters as well as ways of increasing the rate of parametric convergence were studied in the authors' previously published works, e.g. \cite{Pyrkin2015,Pyrkin20151}. 

Combining these two results for output adaptive control with simultaneous disturbance cancellation was suggested in the authors' preceding works and recently applied to the Ball-and-Plate mechatronic system control, see \cite{8795674}. The suggested switching scheme assumes substitution of the obtained frequency estimates as output adaptive stabilizing controller, see \cite{Pyrkin201412110} for details. Such an approach lets us assign switching intervals given a certain decay rate of estimates' discrepancies and therefore avoid undesired spikes and oscillations in closed-loop transients.

However, finite-time estimation algorithms and controllers based on 'consecutive compensator' approach were never combined before to solve the output control problem under parametrical and signal disturbances. Such a fusion is quite promising, it provides convergence of the disturbance estimates with a finite amount of time. A number of papers are devoted to finite-time estimators. In \cite{og2019}, \cite{Gerasimov2018886} an estimator design which provides finite-time convergence is proposed. In \cite{Wang20192963} an adaptive estimator of constant parameters without  the hypothesis that regressor is Persistently Excited (PE).

The approach described in this paper is close to the monitoring function method presented in \cite{RouxOliveira20172684} for the problem of the output adaptive tracking control. The difference is that we use output adaptive robust controller with high-gain observer, which is different from sliding mode control.
 
The paper is structured as follows. After a short introduction, the problem statement together with important assumptions are given. Section~3 describes the output stabilizing controller based on 'consecutive compensator' method. In Section~4 we present the finite-time disturbance frequency estimation algorithm for input harmonic disturbance parameters, then in Section~\ref{sw} a switching scheme that enables using these estimates for feedback controller adjustments. Finally, Section~5 is devoted to the case study of Ball-and-Beam robotic platform control using the presented technique. As an example we consider the problem of ball stabilization on a square platform. The task is complicated by the presence of harmonic disturbances in the system. Obtained results illustrate the overall improved performance of the system. 

\section{Problem statement}

Consider the linear SISO plant
\begin{equation} 
\label{plantIO} 
a(p)y(t)=b(p)[u(t)+\delta (t)], 
\end{equation} 
where $p=\frac{d}{dt} $ is the differentiation operator, $u(t)$ and $y(t)$ are input and output signals respectively, coefficients of the polynomials $a(p)=p^{n} +a_{n-1} p^{n-1}  +...+a_{0} $ and $b(p)=b_{m} p^{m} +b_{m-1} p^{m-1}  +...+b_{0} $ are unknown, and 
\begin{align}
	\delta(t)= \bar{A}\sin(\omega t + \bar{\phi}) 
	\label{distr1}
\end{align}
is the input harmonic disturbance with the unknown amplitude $\bar{A}$, phase shift $\bar{\phi}$, and  frequency 

${0 < \omega_{min} <\omega < \omega_{max} < \infty}$.



The control goal is to guarantee closed-loop system output stability
\begin{equation} 
\label{purpose_of_control} 
\lim_{t\to\infty} y(t)=0 
\end{equation} 
under the following assumptions:
\begin{enumerate}
\item $b(p)$ is a Hurwitz polynomial;
\item only the relative degree of the system $\rho = n - m$ is known, while degrees of the polynomials $a(p)$ and $b(p)$ are unknown.
\item The lower bound $\omega_{min}$ of frequency $\omega$ is known.
\end{enumerate}

\section{Output controller design}

Let us consider the output adaptive controller with modification for the input harmonic disturbance rejection  introduced in \cite{Bobtsov2012}
\begin{align} 
\label{control} 
&u(t)=-k\frac{\alpha (p)(p+1)^{2}}{(p^2 + \omega^2)}\xi _{1} (t), \\
\label{consecutive_compensator_xi}
&\left\{\!\!\begin{array}{l}
\dot\xi_1=\sigma\xi_2,\\
\dot\xi_2=\sigma\xi_3,\\
\dots\\
\dot\xi_{\rho_{m}-1}=\sigma\left(-k_1\xi_1-\dotsc-k_{\rho-1}\xi_{\rho-1}+k_1 y\right),
\end{array}\right.
\end{align} 
where $\alpha (p)$ is a Hurwitz polynomial of $(\rho-1)$ degree, constant coefficient $k > 0$ is chosen such way that transfer function 
\[
{H(p)=\frac{\alpha(p)b(p)(p+1)^{2}}{a(p)(p^2 + \omega^2) + k \alpha(p)b(p)(p+1)^{2}}}
\]
is SPR, while $\sigma > k$ and parameters $k_i$ are calculated for the system \eqref{consecutive_compensator_xi} to be asymptotically stable for $y(t) = 0$. 

Substitution of \eqref{control} into \eqref{plantIO} yields to the closed-loop system description 
\begin{align} 
\label{closedloop2} 
y(t)&=\frac{k b(p)\alpha (p)(p+1)^{2} }{a(p)(p^2 + \omega^2)+kb(p)\alpha (p)(p+1)^{2}} \varepsilon (t) \nonumber \\
&+\frac{b(p)(p^2 + \omega^2)}{a(p)(p^2 + \omega^2)+kb(p)\alpha (p)(p+1)^{2}} \delta (t),
\end{align}
where $\varepsilon (t)=y(t)-\xi _{1} (t)$.

We can further rewrite \eqref{closedloop2} as
\begin{align} 
\label{closedloop3} 
y(t)&=\frac{k b(p) \alpha (p)(p+1)^{2} }{a(p)(p^2 + \omega^2)+kb(p)\alpha (p)(p+1)^{2}}\nonumber \\
&\times [\varepsilon (t)+w(t)], 
\end{align} 
where a signal $w(t)=\frac{(p^2 + \omega^2)}{k \alpha (p)(p+1)^{2} } \delta (t)$.

Let us write the Laplace representations of the disturbance signal \[
\Psi(s)=L\{ \delta (t)\}=\frac{\delta_0}{s}+\frac{\mu \omega + \nu s}{s^2 + \omega^2},
\]

where $s$ is the complex variable.

Then, we can show that 
\begin{align*}
L\{w(t)\} = \frac{s(s^2 + \omega^2)}{k \alpha (s)(s+1)^{3} } \left[\frac{\delta_0}{s}+\frac{\mu \omega + \nu s}{s^2 + \omega^2} \right],
\end{align*} and therefore $w(t)$ is vanishing with time.

Now we can derive the state-space form of \eqref{closedloop3}
\begin{align} 
\label{closedloop4} 
\dot{x}&=Ax+kb(\varepsilon +w), \\
\label{closedloop4_y} 
y&=c^T x, 
\end{align} 
where $x\in R^{n} $ is the state vector of the model \eqref{closedloop4}; $A$, $b$, and $c$ are the corresponding matrices. 

Since by design ${\gamma(p) = a(p)p(p^2 + \omega^2)+kb(p)\alpha (p)(p+1)^{3}}$ is a Hurwitz polynomial, and according to the well-known Kalman-Yakubovich-Popov lemma one can take the positive symmetric matrix $P$, satisfying the following matrix equalities
\begin{equation} 
\label{KYP} 
A^T P+PA=-Q_{1} , \qquad  Pb=c,      
\end{equation} 
where $Q_{1} =Q_{1}^T $ is some positive definite matrix.

Let us next rewrite model \eqref{control}, \eqref{consecutive_compensator_xi} in the state-space form as well
\begin{align}
\label{xi_dot}
\dot \xi(t)&=\sigma(\Gamma\xi(t)+dy(t)),\\
\label{y_hat}
\hat y(t)&=h^{T}\xi(t),
\end{align}
where $\Gamma=\begin{bmatrix}
 0 & 1 & 0 & \dotsc & 0 \\
 0 & 0 & 1 & \dotsc & 0 \\
 0 & 0 & 0 & \dotsc & 0 \\
 \vdots & \vdots & \vdots & \ddots & \vdots \\
 -k_1 & -k_2 & -k_3 & \dotsc & -k_{\rho-1} \\
\end{bmatrix}$,  
$d=\begin{bmatrix}
 0 \\
 0 \\
 0  \\
 \vdots \\
 k_1 \\
\end{bmatrix}$, and $h^{T}=\begin{bmatrix}
 1 & 0 & 0 & \dotsc 0
\end{bmatrix}$ .

Consider the vector 
\begin{equation}
\eta(t)=hy(t)-\xi(t),
\end{equation}
then by force of vector $h$ structure the error $\varepsilon(t)$ will become
\begin{align}
\varepsilon(t)&=y(t)-\hat y(t)=h^{T}hy(t)-h^{T}\xi(t)\nonumber\\
&=h^{T}(hy(t)-\xi(t))=h^{T}\eta(t).
\end{align}

For the derivative of $\eta(t)$ we obtain
\begin{align}
\dot\eta(t)&=h\dot y(t)-\sigma(\Gamma(hy(t)-\eta(t))+dy(t))\nonumber\\
&=h\dot y(t)+\sigma\Gamma\eta(t)-\sigma(d+\Gamma h)y(t).
\end{align}

Since $d=-\Gamma h$ (can be checked by substitution), then we have a complete state-space description of the closed-loop system
\begin{align}
\label{closedloop5} 
\dot{x}(t)&=Ax(t)+kb(\varepsilon(t) +w(t)), &\quad y(t)=c^T x(t),\\
\label{eta_system}
\dot\eta(t)&=h\dot y(t)+\sigma\Gamma\eta(t), &\quad \varepsilon(t)=h^T\eta(t),
\end{align}
where matrix $\Gamma$ is Hurwitz by force of calculated parameters $k_i$, and 
\begin{equation}
\label{lyap_gamma}
\Gamma^TN+N\Gamma=-Q_2,
\end{equation}
where $N=N^T>0$, $Q_2=Q_2^T>0$.

\begin{prop}
\label{main_proposition}
The output feedback controller \eqref{control}, \eqref{consecutive_compensator_xi} applied to the plant \eqref{plantIO} guarantees achievement of the control goal \eqref{purpose_of_control} for the output variable $y(t)$.  
\end{prop}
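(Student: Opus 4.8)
\emph{Proof strategy.}

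The plan is to exhibit a single quadratic Lyapunov function for the cascade \eqref{closedloop5}--\eqref{eta_system}, to show that with the vanishing signal $w$ removed this cascade is exponentially stable, and then to recover \eqref{purpose_of_control} by treating $w$ as a decaying perturbation. First I would take
\[
V(x,\eta)=x^{T}Px+\kappa\,\eta^{T}N\eta,\qquad \kappa>0,
\]
with $P$ furnished by the Kalman--Yakubovich--Popov equalities \eqref{KYP} for the realization \eqref{closedloop4} (whose characteristic polynomial is Hurwitz by design) and $N$ from \eqref{lyap_gamma}. Differentiating $V$ along \eqref{closedloop5}--\eqref{eta_system} and using $A^{T}P+PA=-Q_{1}$, $Pb=c$ and $\Gamma^{T}N+N\Gamma=-Q_{2}$ yields
\begin{align*}
\dot V&=-x^{T}Q_{1}x-\kappa\sigma\,\eta^{T}Q_{2}\eta\\
&\quad+2k\,(c^{T}x)(\varepsilon+w)+2\kappa\,\eta^{T}Nh\,\dot y,
\end{align*}
the cross term collapsing to $2k(c^{T}x)(\varepsilon+w)$ precisely because $Pb=c$ turns $x^{T}Pb$ into $y=c^{T}x$.

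Next I would eliminate the residual $\dot y$ and $\varepsilon$ in favour of the states: from \eqref{closedloop5}, $\dot y=c^{T}Ax+k\,(c^{T}b)(\varepsilon+w)$, and from \eqref{eta_system}, $\varepsilon=h^{T}\eta$. After substitution every term besides the two sign-definite ones falls into one of three classes: bilinear in $x$ and $\eta$ (namely $2k(c^{T}x)(h^{T}\eta)$ and $2\kappa(\eta^{T}Nh)(c^{T}Ax)$), a quadratic in $\eta$ carrying no factor $\sigma$ (namely $2\kappa k\,(c^{T}b)\,(\eta^{T}Nh)(h^{T}\eta)$), or bilinear in $w$ against $x$ or against $\eta$. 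The bilinear $x$--$\eta$ terms I would split by Young's inequality, charging an arbitrarily small multiple of $\|x\|^{2}$ to $x^{T}Q_{1}x$ and the remainder to an $\eta$-quadratic; the $w$-terms likewise split into a small state-quadratic plus a multiple of $w^{2}$. The decisive point is that all of the $\eta$-coefficients so produced are independent of $\sigma$, since $A$, $b$, $c$ in \eqref{closedloop4} are constructed without reference to $\sigma$. Consequently, after fixing $\kappa$ (for instance $\kappa=1$) and then picking $\sigma$ large enough --- consistently with the design requirement $\sigma>k$ --- the lone high-gain term $-\kappa\sigma\,\eta^{T}Q_{2}\eta$ dominates the accumulated $\eta$-quadratics and one obtains
\[
\dot V\le-\lambda_{1}\|x\|^{2}-\lambda_{2}\|\eta\|^{2}+\lambda_{3}\,w^{2}
\]
with positive $\lambda_{1},\lambda_{2},\lambda_{3}$, so that the $(x,\eta)$-system is input-to-state stable with respect to $w$.

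Finally, as noted above the Laplace transform $L\{w\}$ has all its poles at the roots of the Hurwitz polynomial $\alpha(s)$ and at $s=-1$, hence $w(t)\to0$ exponentially and $w\in L_{2}$. An input-to-state stable system fed by an input that tends to zero has state tending to zero --- or, more concretely, the comparison lemma applied to $\dot V\le-\lambda V+\lambda_{3}w^{2}$ with $w\in L_{2}$ gives $V(t)\to0$ --- so $x(t)\to0$ and $\eta(t)\to0$, and in particular $y(t)=c^{T}x(t)\to0$, which is \eqref{purpose_of_control}. The step I expect to be the main obstacle is the middle one: verifying that every $\eta$-quadratic generated by routing $\dot y$ back through $\varepsilon=h^{T}\eta$ --- above all the self-coupling $2\kappa k\,(c^{T}b)\,(\eta^{T}Nh)(h^{T}\eta)$ --- is genuinely $\sigma$-free, while at the same time keeping the Young splits from destroying negativity of the $x$-block, so that the chosen order of quantifiers ($\kappa$ first, then $\sigma$) actually closes the estimate.
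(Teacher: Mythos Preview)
Your proposal is correct and follows essentially the same route as the paper: the same Lyapunov function $V=x^{T}Px+\eta^{T}N\eta$ (the paper simply takes your $\kappa=1$ from the start), the same use of the KYP identities \eqref{KYP} and \eqref{lyap_gamma}, Young-inequality splitting of the bilinear $x$--$\eta$ and $w$-terms, and the same conclusion $\dot V\le-\lambda V+Cw^{2}$ with $w$ exponentially decaying. The only cosmetic difference is bookkeeping --- the paper carries explicit $k$-weighted Young splits (e.g.\ $2kx^{T}Pbh^{T}\eta\le k^{-1}x^{T}Pbb^{T}Px+k^{3}\eta^{T}hh^{T}\eta$) and then imposes two matrix inequalities on $k$ and $\sigma$ jointly, whereas you charge an arbitrarily small $\|x\|^{2}$-fraction to the $x$-block and let the high-gain $-\sigma\eta^{T}Q_{2}\eta$ absorb the rest; both close the identical estimate.
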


\begin{proof}
Following the ideas of \cite{BOBTSOV2013408}, choose the Lyapunov function
\begin{equation} 
\label{lyapunov} 
V=x^T Px+\eta ^T N\eta . 
\end{equation} 
Differentiation of \eqref{lyapunov} yields
\begin{align}
\label{lyapunov_dot}
\dot{V}&=x^{T} (A^{T} P+PA)x+2kx^{T} Pbh^{T} \eta +2kx^{T} Pbw \nonumber\\
&+\eta ^{T} \sigma (\Gamma ^{T} N+N\Gamma )\eta +2\eta ^{T} Nhc^{T} Ax\nonumber\\
&+2k\eta ^{T} Nhc^{T} bw+2k\eta ^{T} Nhc^{T} bh^{T} \eta .    
\end{align}
Consider inequalities  
\begin{align}
\label{inequalities}
2kx^{T} Pbh^{T} \eta &\le k^{-1} \, x^{T} Pbb^{T} Px+k^{3} \eta ^{T} hh^{T} \eta , \nonumber\\
2kx^{T} Pbw&\le k^{-1} \, x^{T} Pbb^{T} Px+k^{3} \, w^{2} ,  \nonumber\\
2k\eta ^{T} Nhc^{T} bh^{T} \eta &\le k\eta ^{T} Nhc^{T} bb^{T} ch^{T} N\eta +k\eta ^{T} hh^{T} \eta ,   \nonumber\\
2\eta ^{T} Nhc^{T} Ax&\le k\eta ^{T} Nhc^{T} AA^{T} ch^{T} N\eta +k^{-1} \, x^{T} x, \nonumber\\
2k\eta ^{T} Nhc^{T} bw&\le k\eta ^{T} Nhc^{T} bb^{T} ch^{T} N\eta +k\, w^{2}.
\end{align} 
Thus
\begin{align}
\label{lyapunov_dot2}
\dot{V}&\le -x^{T} Q_{1} x-\sigma \eta ^{T} Q_{2} \eta +k^{-1} \, x^{T} Pbb^{T} Px+k^{3} \eta ^{T} hh^{T} \eta \nonumber\\ 
&+k^{-1} \, x^{T} Pbb^{T} Px+k^{3} \, w^{2} +k\eta ^{T} Nhc^{T} bb^{T} ch^{T} N\eta   \nonumber\\
&+k\eta ^{T} hh^{T}\eta+k\eta ^{T} Nhc^{T} AA^{T} ch^{T} N\eta \nonumber\\ 
&+k^{-1} \, x^{T} x+k\eta ^{T} Nhc^{T} bb^{T} ch^{T} N\eta +k\, w^{2} .  
\end{align}

Let numbers $k>0$ and $\sigma >0$ be such that
\begin{align}
&-Q_{1} +k^{-1} Pbb^{T} P+k^{-1}Pbb^{T}P+k^{-1}I &\le -Q'<0,\nonumber\\ 
&-\sigma Q_{2}+(k+k^{3}) hh^{T}+kNhc^{T} bb^{T} ch^{T} N& \nonumber\\ 
&+kNhc^{T} AA^{T} ch^{T} N +k Nhc^{T} bb^{T} ch^{T} N &\le -Q'' <0,\nonumber
\end{align} 
then for the derivative of \eqref{lyapunov} we have

\begin{align}
\label{lyapunov_dot3}
\dot{V}\le -x^{T} Q'x-\eta ^{T} Q''\eta +(k^{3} +k)w^{2} .
\end{align} 

Hence it is easy to obtain the inequality
\begin{equation} 
\label{lyapunov_dot4} 
\dot{V}\le -\lambda V+(k^{3} +k)w^{2} , 
\end{equation} 
where $\lambda >0$.

Since $w(t)$ is a decaying term, condition \eqref{lyapunov_dot4} guarantees stability of the closed-loop system.

\end{proof}

\section{Finite-time Disturbance Frequency Estimation}
\label{drem}

Here we introduce the algorithm for finite-time input harmonic disturbance parameters estimation.

At first, we parametrize the disturbance model.

Since the considered closed-loop system is linear and stable, the output variable $y(t)$ (when the transient time has elapsed) is tracking the external disturbance, i.e. $y(t)= A\sin(\omega t + \phi)$.

Consider two auxiliary signals 
\begin{equation}
	\label{y1}
	y_1(t) = y(t-\tau),\\
\end{equation}
\begin{equation}
	\label{y2}
	y_2(t) = y(t-2\tau),
\end{equation}
where $\tau \in \mathbb{R}_+$ are chosen values of the delay duration.

Rewrite \eqref{y1} and \eqref{y2}
\begin{align}
	\label{par1}
	y_1(t) &= A\sin(\omega t +\phi) \cos \omega \tau - A\cos(\omega t +\phi) \sin\omega \tau,\\
	y_2(t) &= A\sin(\omega t +\phi) \cos 2\omega \tau - A\cos(\omega t +\phi) \sin 2\omega \tau.
\end{align}

Multiplying $y_1(t)$ by $\sin 2\omega\tau$ and $y_2(t)$ by $\sin\omega\tau$ and applying double angle formulas, we get
\begin{align}
	\label{par2}
	&y_1(t)\sin 2\omega\tau -y_2(t)\sin\omega\tau  = A\sin(\omega t +\phi) \cos \omega \tau \sin 2\omega\tau \nonumber\\
	& - A\cos(\omega t +\phi) \sin\omega \tau \sin 2\omega\tau \nonumber\\
	&- A\sin(\omega t +\phi) \cos 2\omega \tau \sin\omega \tau \nonumber\\ 
	& + A\cos(\omega t +\phi) \sin 2\omega \tau \sin\omega\tau = \nonumber \\
	&= 2 A\sin(\omega t +\phi) \cos^2 \omega \tau \sin \omega\tau \nonumber \\
	&- A\sin(\omega t +\phi) (2\cos^2 \omega \tau -1)\sin\omega \tau = \nonumber \\
	&=  A\sin(\omega t +\phi)\sin\omega \tau = y(t)\sin\omega \tau
\end{align}

Dividing \eqref{par2} by $\sin\omega \tau$, we get
\begin{equation}
	2 y_1(t)\cos\omega\tau = y(t)+y_2(t).
\end{equation}

Now, we can derive the exact model (without assumption on exponential decaying terms due to unknown initial conditions) of the harmonic signal generator in the linear regression form
\begin{equation}
    z(t) = \varphi(t) \theta,
    \label{new_form}
\end{equation}
where $z(t) = \frac{1}{2}\left(y(t)+y_2(t)\right)$, $\varphi(t) = y_1(t)$, and $\theta=\cos{\omega \tau}$.
Since the signals $y(t)$, $y_1(t)$, $y_2(t)$ are measurable, it is easy to obtain the disturbance frequency.

To estimate unknown parameters of the disturbance model \eqref{new_form} we use the following algorithm
\begin{align}
\label{est}
   \dot{\hat\theta}(t) &= K \varphi(t) (z(t) - \varphi(t)\hat{\theta}(t)), \\
  \label{est_finite}
   \hat{\theta}_F(t) &= \frac{1}{1-w(t)} \left[ \hat{\theta}(t)-w(t)\hat{\theta}_0 \right], \\
   \label{est_w}
   \dot{w}(t) &= -K \varphi^2(t)w(t),
\end{align}
where $K>0$, $w(0) = 1$, and $\hat{\theta}_0$ is an initial guess on the disturbance parameters values. 

It can be shown that $\hat\theta_F(t)$ converges to the real value of $\theta$ for finite time, which can be reduced by adjusting the gain $K$. The only issue with selecting very high values for $K$ is that the presented scheme becomes very sensitive to measurements noise. But in any case we need to wait some small amount of time before $w(t) < 1$ such that \eqref{est_finite} does not give division by zero. Detailed description and proof for \eqref{est}--\eqref{est_w} could found in \cite{og2019}.

\section{Switching scheme}
\label{sw}

The last step in the proposed scheme is to set up a criterion, which would implement disturbance parameters' estimates substitution to the nominal controller \eqref{control} $\omega(t_i) = \bar{\omega}(t_i)$.

As it was outlined above, in contrast to \cite{8795674} we perform a single switching, i.e. the substitution of the disturbance frequency estimates will be performed at the moment of time when its already converged. 
The switching scheme can be analytically described by the relations below and applied by using a trigger scheme:
\begin{equation}
\label{cases}
 \begin{cases}
 \bar\omega(t) = \omega_{min}, &\text{where $t < T$},\\
 \bar\omega(t) = \frac{\arccos\hat{\theta}_F(T)}{\tau}, &\text{where $t \geq T$},
 \end{cases}
\end{equation}
where $\hat{\theta}_F(T)$ is obtained from \eqref{est}--\eqref{est_w}.

A method reported above is quite similar to widely-used dwell-time switching logic and allows to avoid undesired jumps and oscillation in transients that can lead to loosing closed-loop system stability.

\begin{figure*}[ht]
    \centering
    \begin{subfigure}[b]{0.49\textwidth}
        \includegraphics[width=\textwidth]{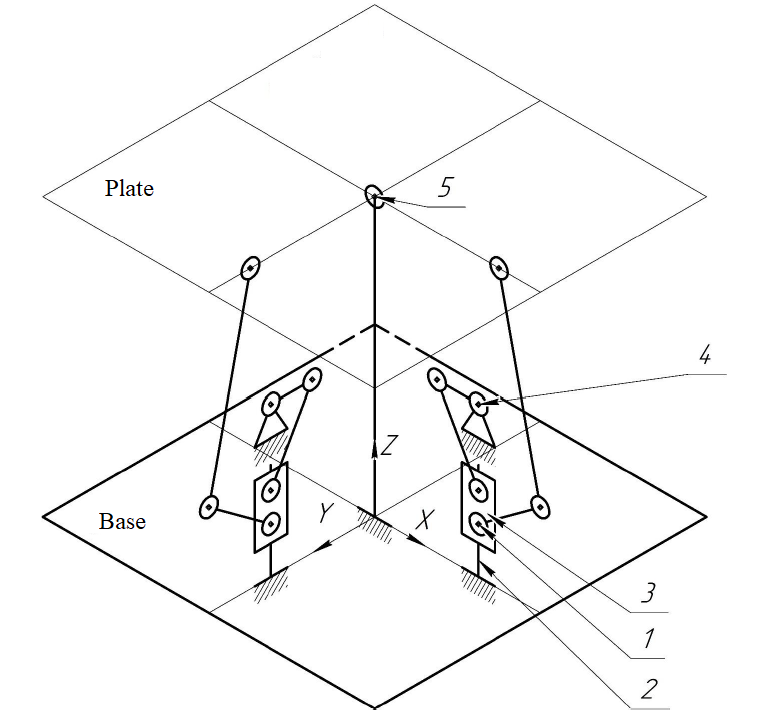}
        \caption{Ball-and-plate setup mechanical structure}
        \label{mech}
    \end{subfigure}
    \begin{subfigure}[b]{0.49\textwidth}
        \includegraphics[width=\textwidth]{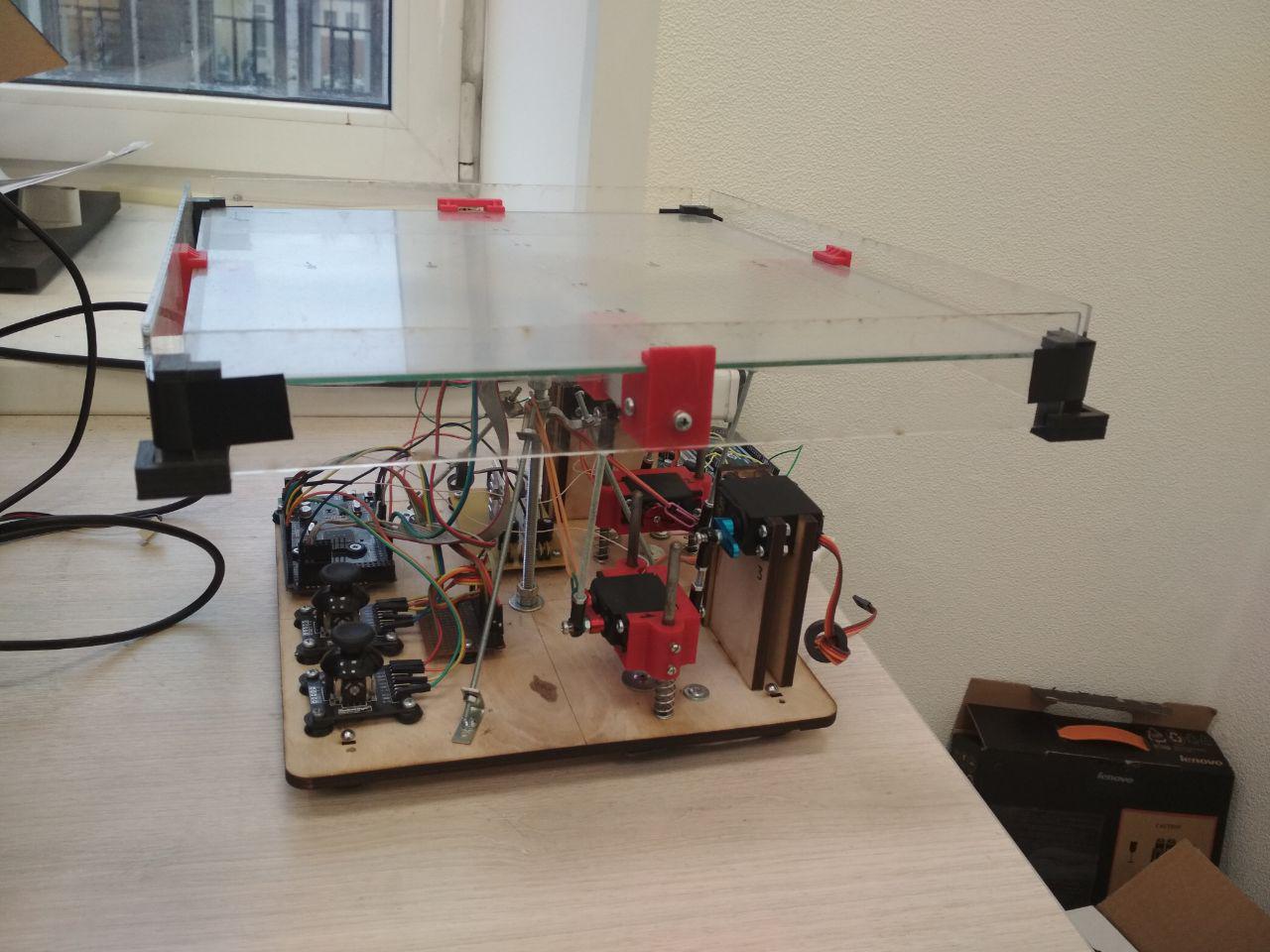}
        \caption{Ball-and-plate lab setup}
        \label{photo}
    \end{subfigure}
    \caption{Ball and Plate}\label{scheme}
\end{figure*}
\section{Case-Study Results}

In this chapter we analyse how the proposed output controller and disturbance frequency estimation algorithm working in a loop can be applied for the mechatronic setup.

Here we consider a parallel kinematics Ball-and-Plate robotic platform as a plant (see Fig.~\ref{photo}). The goal is to stabilize a steel ball in user-defined coordinates on the square plate under input harmonic disturbances by applying voltages to the servo drives controlling the inclination of a plate, while kinematic and dynamic parameters of the system are unknown. 

The ball and beam system can well be approximated by two linear decoupled systems. Therefore, this system with two inputs and two outputs can be treated as two decoupled SISO systems, therefore the proposed control approach can be implemented. 

\subsection{Setup description}

The mechatronic setup is built using four servo drives with encoders for position feedback (2 for platform inclination and 2 to generate disturbances), while a resistive touch sensor placed on top of a platform is used for ball position measurements. 

Platform itself has two degrees of freedom. Fig.~\ref{mech} illustrates the kinematic scheme of the Ball-and-plate setup. Where, 1 is control servo, 2 is binding runner for 3, 3 is binding for servo, 4 is disturbance servo, 5 is Hooke's joint. 

\subsection{Mathematical model}
Lets consider ball motion in OXZ plane (see Fig.~\ref{sketch}).

\begin{figure}[tb]
    \centering
    \includegraphics[width=\columnwidth]{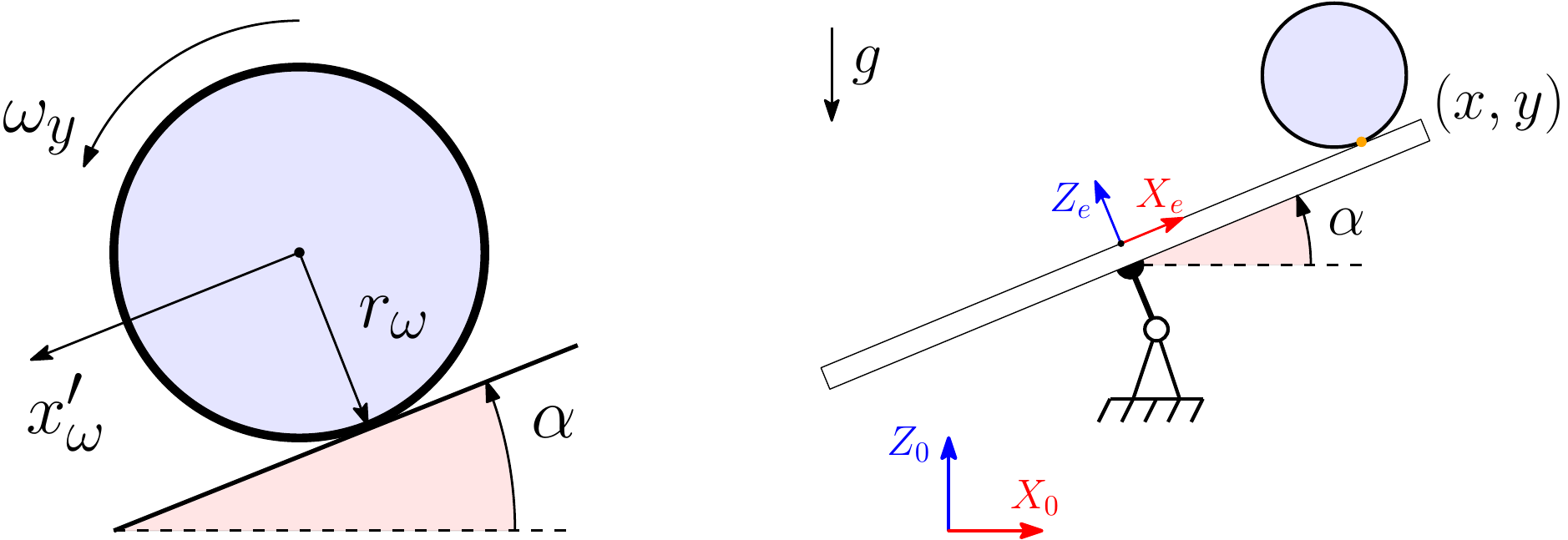}
    \caption{Sketch of a ball on a plate pictured in OXZ plane}
    \label{sketch}
\end{figure}

Here we derive the equations of motion for experimental setup under the following assumptions:
\begin{itemize}
	\item There is no slipping for ball.
	\item The ball is completely symmetric and homogeneous.
	\item Friction forces are neglected.
	\item The ball and plate are in contact all time.
\end{itemize}

By assuming the generalized coordinates of system to be $x_b$ and $y_b$ for position of the ball in each direction and $\alpha$ and $\beta$ the inclinations of the plate, i.e. $q = [x_b \, y_b \, \alpha \, \beta]^T$.

The Euler-Lagrange equations describing system dynamics are as follows \cite{Dobriborsci2018533, Dobriborsci2018655,Dobriborsci2017}:
\begin{equation}
\label{total}
\frac{d}{dt} \frac{\partial T}{\partial \dot{q}_i} - \frac{\partial T}{\partial q_i} + \frac{\partial V}{\partial q_i} = \tau_i,
\end{equation}
where $q_i$ and $\tau_i$ stand for $i$-th generalized coordinate and force respectively, $T$ is kinetic energy of the system, $V$ is potential energy of the system. 

In accordance to previous works we obtain equation of motion \cite{Dobriborsci2018533, Dobriborsci2018655,Dobriborsci2017}
\begin{subequations}
\begin{equation}
\label{eq1a}
\bigg(m_b+\frac{I_b}{r_b^2}\bigg)\ddot{x_b}-m\bigg(x_b\dot{\alpha}^2+y_b\dot{\alpha}\dot{\beta}\bigg)+m_bgsin\alpha =0
\end{equation}
\begin{equation}
\label{eq1b}
\bigg(m_b+\frac{I_b}{r_b^2}\bigg)\ddot{y_b}-m\bigg(y_b\dot{\beta}^2+x_b\dot{\alpha}\dot{\beta}\bigg)+m_bgsin\beta =0
\end{equation}
\end{subequations}

\subsubsection{Linearization}

We consider the angles of servo arms $Q_x$  and $Q_y$ as inputs, while the ball coordinates $x$ and $y$ are considered as the output. 

The relations between inclination angles of the plate $\alpha, \beta$ and $Q_x, Q_y$ are the following
\begin{subequations}
\begin{equation}
\label{eq2a}
 \alpha=\frac{d_a}{L}Q_x 
\end{equation}
\begin{equation}
\label{eq2b}
 \beta=\frac{d_a}{L}Q_y
\end{equation}
\end{subequations}
where $d_a$ is the length of the servo arm and $L$ is side length of the plate. 

In the case of a slow rate of change for the plate angles, equations \eqref{eq1a}, \eqref{eq1b} can be linearized
\begin{subequations}
\begin{equation}
\label{eq3a}
\bigg(m_b+\frac{I_b}{r_b^2}\bigg)\ddot{x}-\frac{2m_bgd}{L}Q_x=0
\end{equation}
\begin{equation}
\label{eq3b}
\bigg(m_b+\frac{I_b}{r_b^2}\bigg)\ddot{y}-\frac{2m_bgd}{L}Q_y=0
\end{equation}
\end{subequations}

The equations \eqref{eq3a}, \eqref{eq3b} are equivalent because of the symmetry of the plate. 

Taking into account servo drive dynamics, which is assumed to be captured by the 1st order aperiodic link transfer function, we can derive the following relations for the numerator and denominator of the system transfer functions for $x$ and $y$ control channels

\begin{subequations}
\begin{equation}
\label{eq4a}
b(p)=2m_bgdr_b^2K_m,
\end{equation}
\begin{equation}
\label{eq4b}
\quad\quad\quad a(p)=L(m_br_b^2T_mp^3+I_bp^2),
\end{equation}
\end{subequations}
where $K_m$ and $T_m$ are servo drives gain and time constant respectively.

Parameters of the system are presented in Table~\ref{table}.
\begin{table}[tb]
\centering
\caption{ Parameters of the Ball-and-Plate system}
\begin{tabular}{|l|l|l|l|}
\hline
$m_b$ & 0.05 $kg$    & $L$   & 0.11 $m$                          \\ \hline
$g$   & 9.81 $m/s^2$ & $K_m$ & 0.25                              \\ \hline
$d$   & 0.02 $m$     & $T_m$ & 0.018 $s$                         \\ \hline
$r_b$ & 0.0125 $m$   & $I_b$  & 3.13 $\cdot 10^{-5} kg \cdot m^2$ \\ \hline
\end{tabular}
\label{table}
\end{table}

\subsection{Simulation results}

We will verify the proposed approach for two cases. At first, consider system performance assuming that the disturbance signal is directly measurable. We demonstrate results for two harmonic disturbance signals with different parameters
\begin{equation}
\delta_1(t)=3sin(1.2t+\frac{\pi}{2}),
\label{distur1_6_1}
\end{equation}
and
\begin{equation}
\delta_2(t)=3sin(4t+\frac{\pi}{2}).
\label{distur2_6_1}
\end{equation}

\begin{figure*}[ht]
    \centering
    \begin{subfigure}[b]{0.3\textwidth}
        \includegraphics[width=\textwidth]{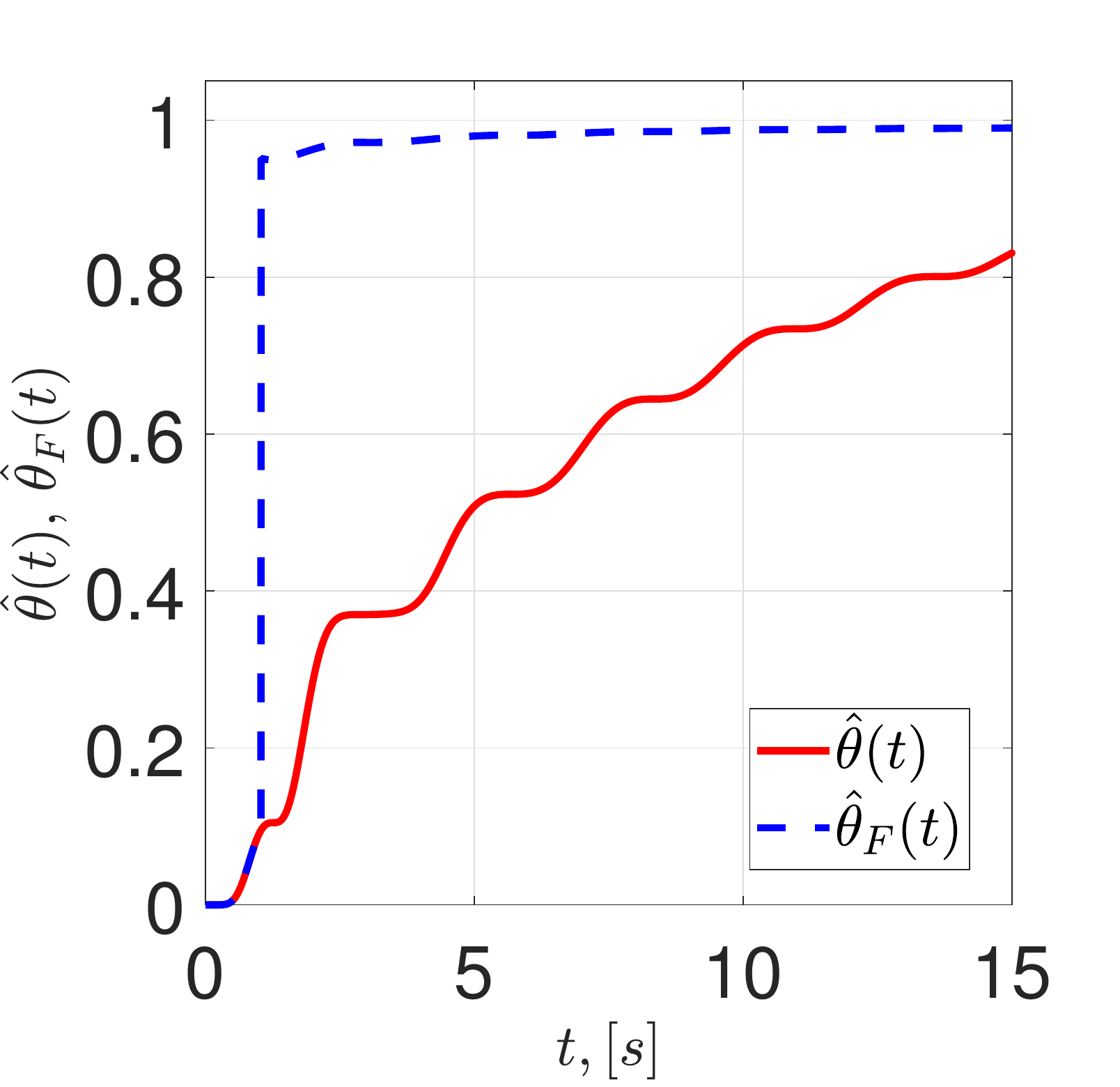}
        \caption{Standard gradient descent method and finite-time modification, for $\omega = 1.2 \frac{rad}{s}$, $K=0.5$}
    \end{subfigure}
    \begin{subfigure}[b]{0.3\textwidth}
        \includegraphics[width=\textwidth]{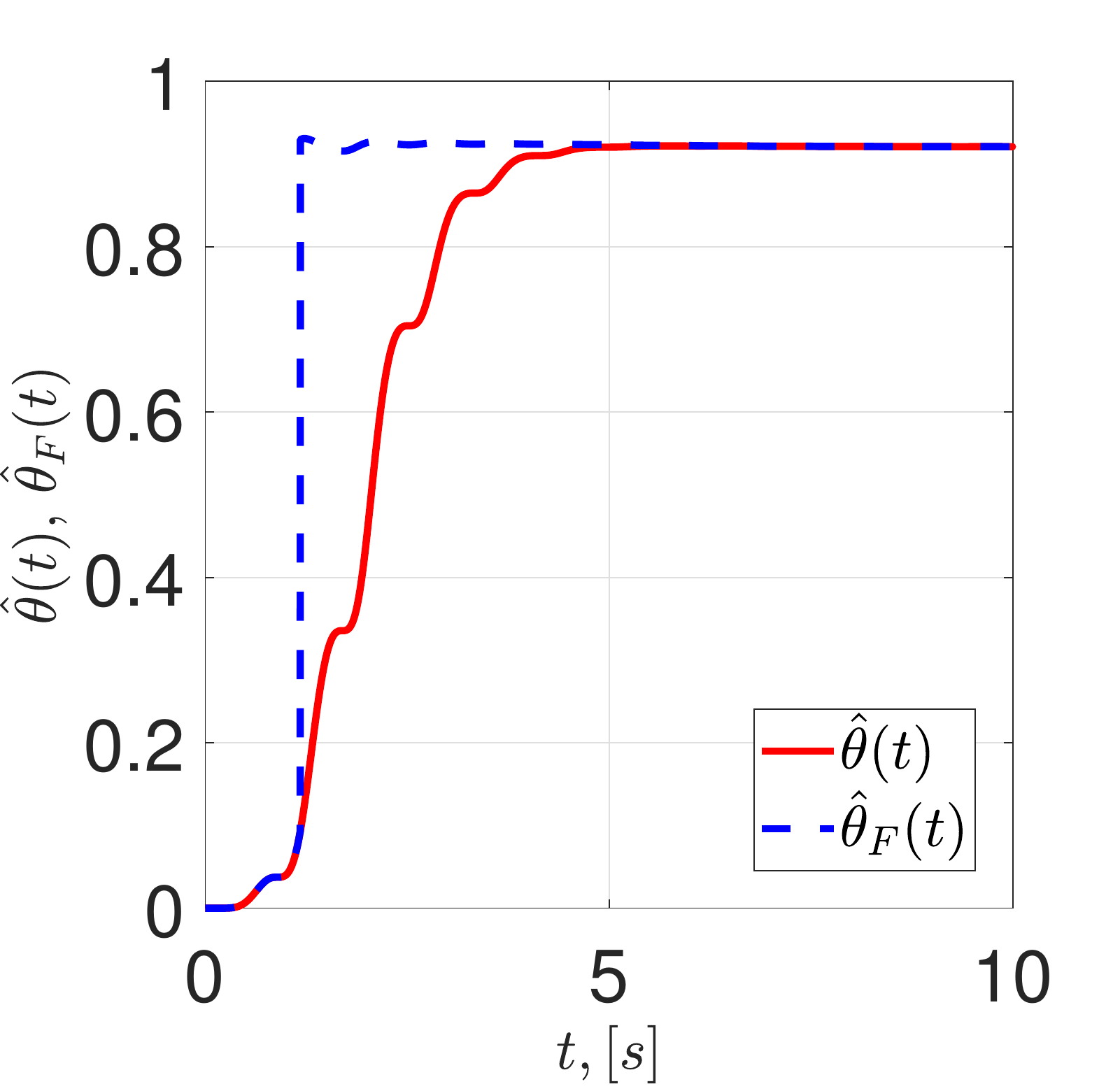}
        \caption{Standard gradient descent method and finite-time modification, $\omega = 4 \frac{rad}{s}$, $K=0.9$}
    \end{subfigure}
    \begin{subfigure}[b]{0.3\textwidth}
        \includegraphics[width=\textwidth]{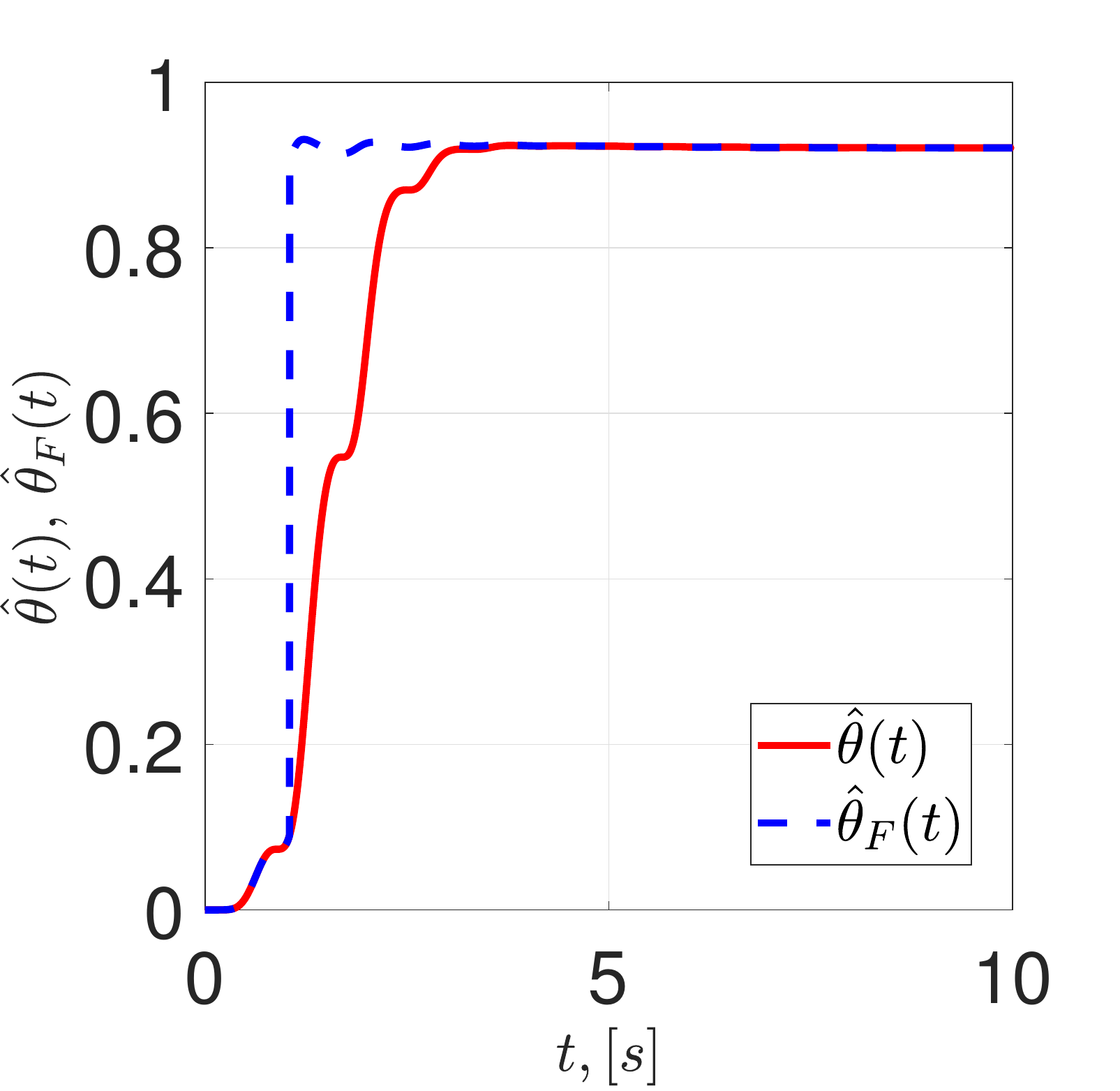}
        \caption{Standard gradient descent method and finite-time modification, where $\omega = 4 \frac{rad}{s}$, $K=1.8$}
    \end{subfigure}
    \caption{Transients for disturbance parameter estimation}
    \label{output_freq1_2_K0_5}
\end{figure*}

Transients for the signals of the closed-loop system  are presented in Fig.\ref{output_freq1_2_K0_5}, which illustrate total rejection of the input harmonic disturbance in case we use finite-time algorithm modification and its comparison to standard gradient-descent method. The increase of the gain coefficient parameter $K$ leads to the more accurate convergence of the estimates. The gradient-descent method provides the convergence of the parameters in more than 30 seconds with frequency $\omega=1.2$, $K=0.5$, $\tau=0.1$ without modification and in about 8 seconds with $K=3.8$ and with finite-time modification, whereas with frequency $\omega=4$, $K=0.9$, $\tau=0.1$ in 6 seconds without modification and in 3 seconds with $K=1.8$ with finite-time modification.

Now, consider the more realistic case, when external disturbance is not directly measurable, and we estimate its parameters by measuring system output only, while plant parameters are a priori unknown. 

For the dynamical model of the ball-and-plate setup that we use (relative degree $\rho=3$), the proposed output controller with tuned parameters can be described as
\begin{equation}
\label{controller_bp}
u(t)=-\kappa\frac{\alpha(p)(p+1)^2}{p(p^2+\bar{\omega})}\xi_1,
\end{equation}
\begin{equation}
\left\{
	\begin{array}{l}
	\dot{\xi}_1(t)=\sigma\xi_2(t),  \\
	\\ \dot{\xi}_{2}(t)=\sigma(-k_1\xi_1(t)-k_{2}\xi_{2}(t)+k_1y(t)), \\
	\\
	\end{array}
	\right.
\end{equation}
where $\kappa = 1.2, \sigma = 35, k_1 = 2, k_2 = 5$, $\alpha(p) = p^2 + 3p + 1$.

Again, we demonstrate results for two harmonic disturbance signals with different parameters \eqref{distur1_6_1} and \eqref{distur2_6_1}.
 
\begin{figure*}[ht]
    \centering
    \begin{subfigure}[b]{0.49\textwidth}
        \includegraphics[width=\textwidth]{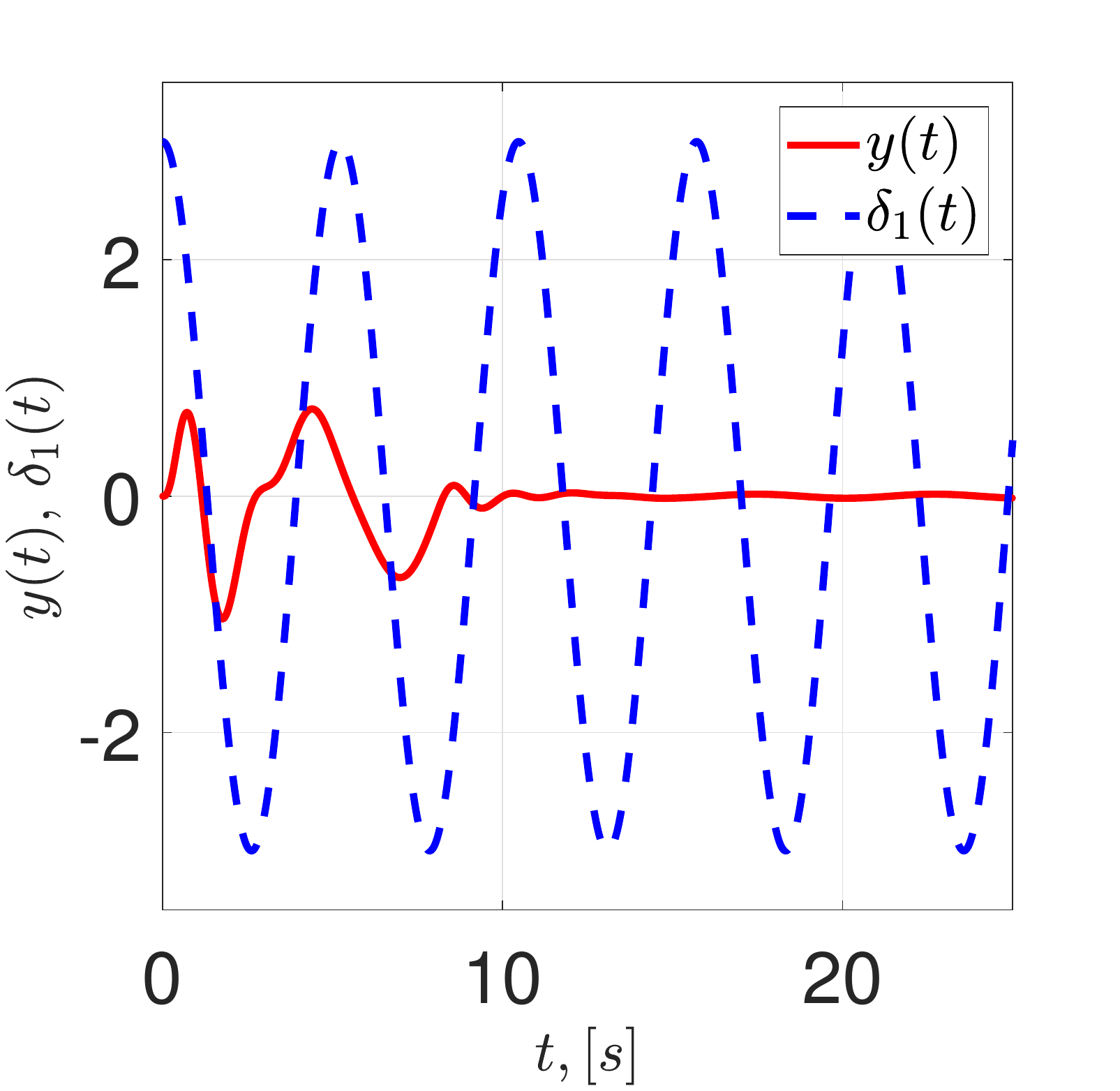}
        \caption{Output and the disturbance signals with finite-time}
    \end{subfigure}
    \begin{subfigure}[b]{0.49\textwidth}
        \includegraphics[width=\textwidth]{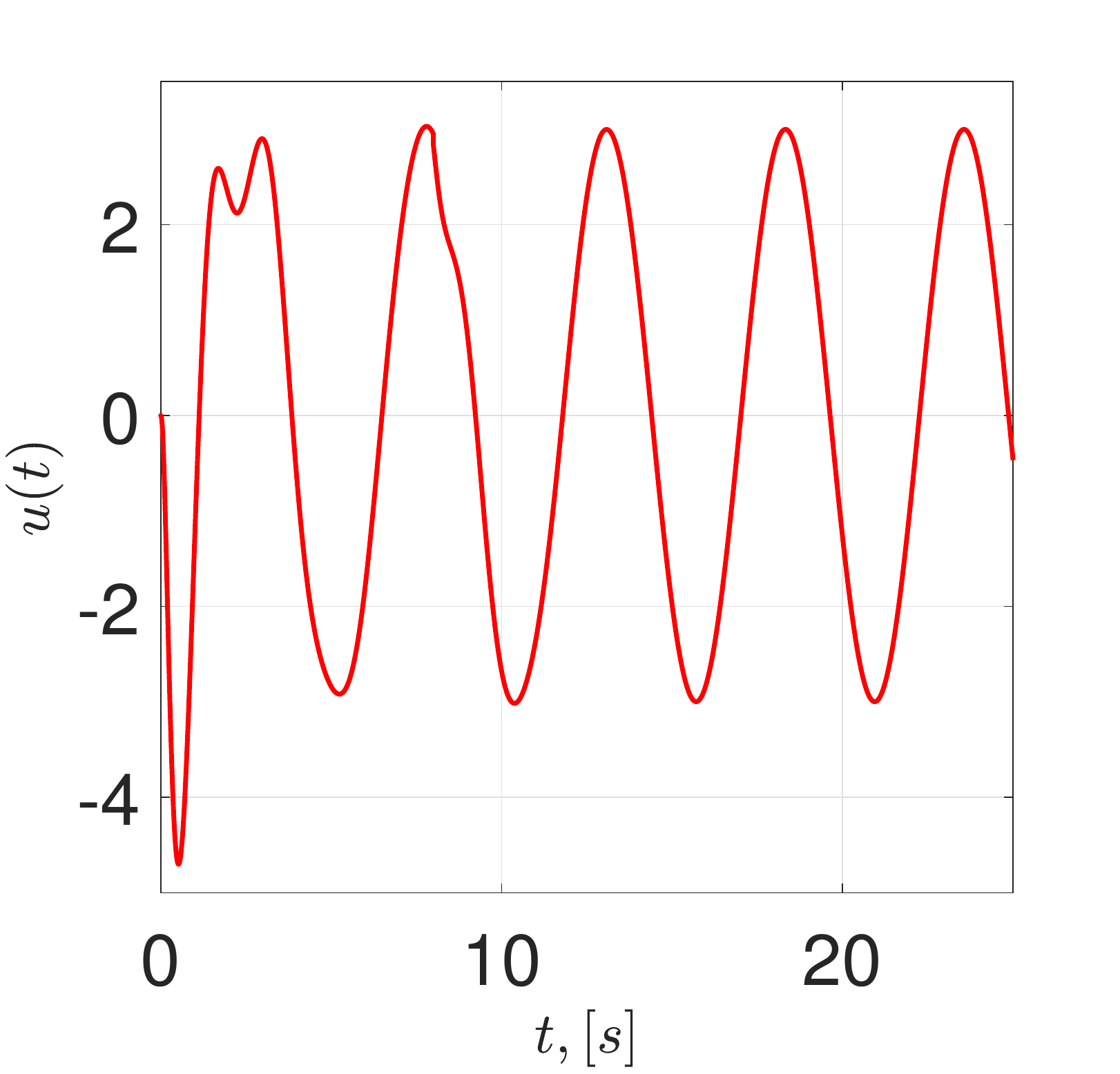}
        \caption{Control signal with finite-time modification}
    \end{subfigure}
    \caption{For $\omega = 1.2 \frac{rad}{s}$, $K=7.1$}\label{output}
\end{figure*}

\begin{figure*}[ht]
    \centering
    \begin{subfigure}[b]{0.49\textwidth}
        \includegraphics[width=\textwidth]{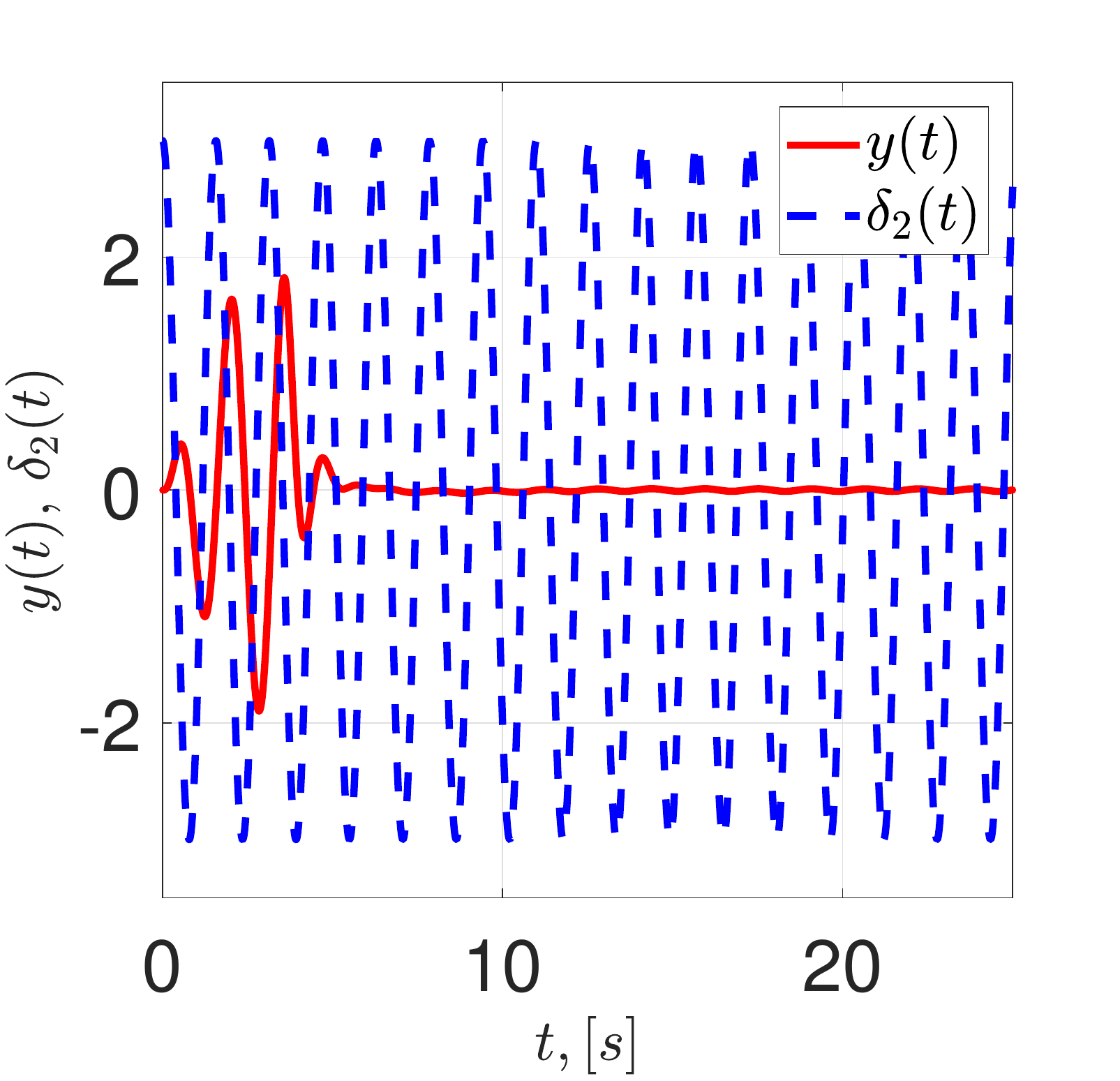}
        \caption{Output and the disturbance signals with finite-time}
    \end{subfigure}
    \begin{subfigure}[b]{0.49\textwidth}
        \includegraphics[width=\textwidth]{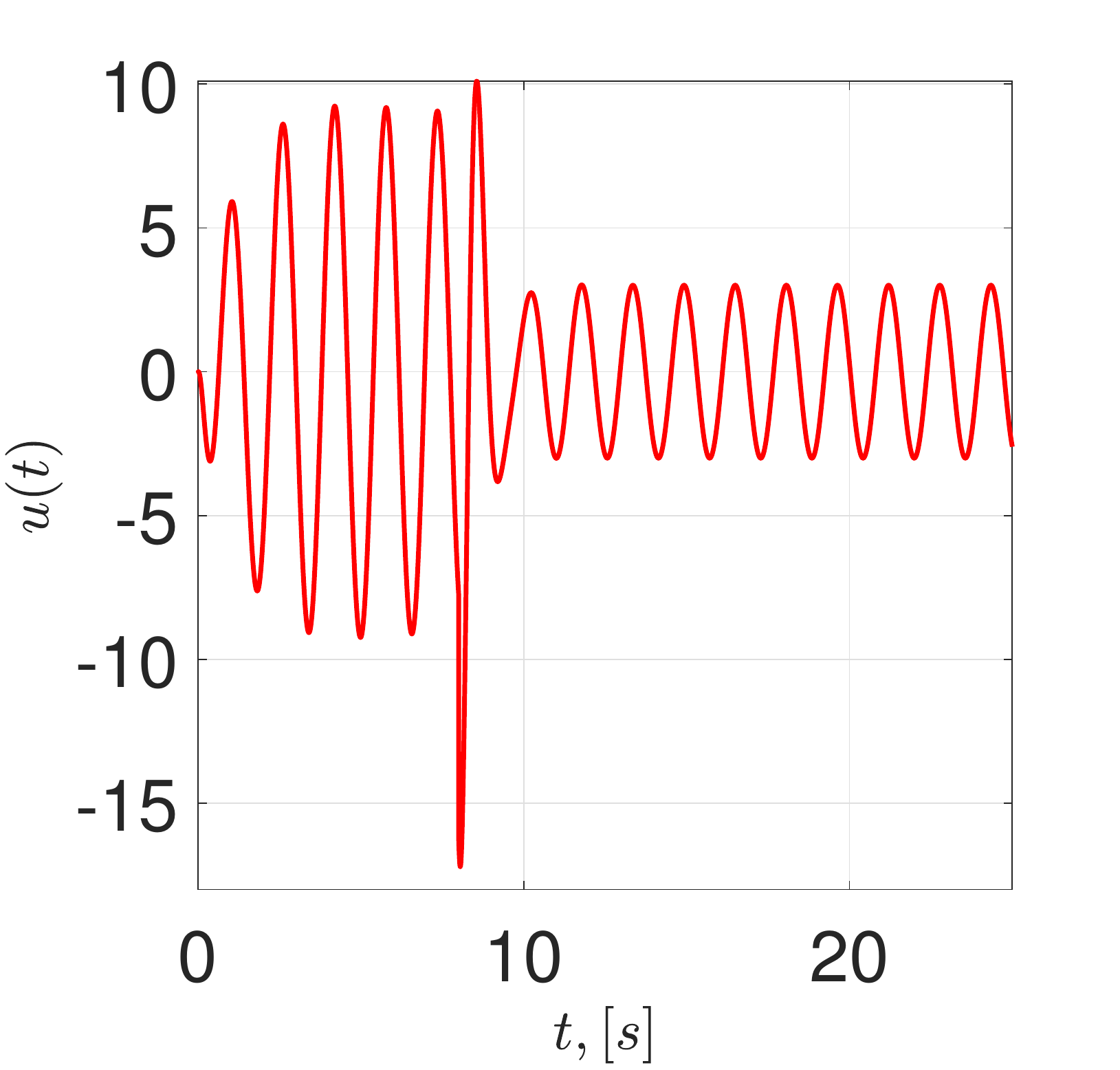}
        \caption{Control signal with finite-time modification}
    \end{subfigure}
    \caption{For $\omega = 4 \frac{rad}{s}$, $K=2.8$}\label{output2}
\end{figure*}

The simulation results of the disturbed plant behavior are presented in Fig.\ref{output2}, where identification of the frequency can be observed. The finite-time algorithm modification proves its efficiency and provides faster convergence than gradient-descent method. The transient time constitutes 8 seconds with frequency $\omega=1.2$, $K=7.1$, $\tau=0.1$  and 5 seconds with $\omega=4$, $K=2.8$, $\tau=0.1$.

\section{Conclusion}

This work presented a modification of the output adaptive control algorithm based on the ”consecutive compensator” method, where the unknown input harmonic disturbance rejection is organized via the finite-time disturbance parameters estimation algorithm. 

The proposed controller remains a simple structure, but guarantees better closed-loop system performance because of the finite-time convergence of the disturbance's parameters estimates. At the same time, this approach simplifies the switching rule used for parameters' estimates substitution to the feedback controller. 

Possible directions for future work include extension of the obtained results for the case when an input disturbance is approximated by the Fourier series, i.e. dealing with multi-harmonic signals, solving trajectory tracking tasks, including MIMO cases, and modifications of the finite-time algorithms for better robustness to measurement noise. 

\bibliographystyle{unsrt}  
\bibliography{template}

\end{document}